\newcommand{\cv}{\mathbf{c}}
\newcommand{\hv}{\mathbf{h}}
\newcommand{\rv}{\mathbf{r}}
\newcommand{\sv}{\mathbf{s}}
\newcommand{\uv}{\mathbf{u}}
\newcommand{\xv}{\mathbf{x}}
\newcommand{\CC}{\mathbb{C}}
\newcommand{\FF}{\mathbb{F}}
\newcommand{\QQ}{\mathbb{Q}}
\newcommand{\RR}{\mathbb{R}}
\newcommand{\ZZ}{\mathbb{Z}}
\newcommand{\tr}{{\rm Tr}}
\newcommand{\diag}{\mathsf{diag}}
\newcommand{\vol}{{\rm vol}}
\newcommand{\pf}{\mathfrak{p}}
\newcommand{\Pf}{\mathfrak{P}}
\newcommand{\Oc}{\mathcal{O}}
\newcommand{\Rc}{\mathcal{R}}
\newtheorem{defn}{Definition}
\newtheorem{exmp}{Example}
\newtheorem{lem}{Lemma}
\newtheorem{rem}{Remark}
\newtheorem{prop}{Proposition}
\newtheorem{cor}{Corollary}
\newcommand\revised[1]{\textcolor{black}{#1}}
\newcommand\WK[1]{\textcolor{black}{#1}}
\title{Construction A of Lattices over Number Fields and Block Fading (Wiretap) Coding}
\author{Wittawat Kositwattanarerk, Soon Sheng Ong, Fr\'ed\'erique Oggier
\thanks{W. Kositwattanarerk is with the Department of Mathematics, 
Faculty of Science, Mahidol University, Thailand. This work was done while he was with 
the Division of Mathematical Sciences, Nanyang Technological University, Singapore. 
S.S Ong and F. Oggier are with the Division of Mathematical Sciences, Nanyang Technological University, Singapore. 
The research of W. Kositwattanarerk (while at Nanyang Technological University), of S.S. Ong and F. Oggier for this work has been supported by the Singapore National
Research Foundation under Research Grant NRF-RF2009-07. 
email: wittawat.kos@mahidol.edu,SSONG1@e.ntu.edu.sg, frederique@ntu.edu.sg. 
Part of this work appeared in the proceedings of the IEEE International Symposium on Information Theory (ISIT 2013), Istanbul, Turkey \cite{isit}.
}
}
\date{\today}
\begin{document}

\maketitle

\begin{abstract}
We propose a lattice construction from totally real and CM fields, which naturally generalizes Construction A of lattices from $p$-ary codes obtained from the cyclotomic field $\QQ(\zeta_p)$, $p$ 
a prime, which in turn contains the so-called Construction A of lattices from binary codes as a particular case.
We focus on the maximal totally real subfield $\QQ(\zeta_{p^r}+\zeta_{p}^{-r})$ of the cyclotomic 
field $\QQ(\zeta_{p^r})$, $r\geq 1$. Our construction has applications to coset encoding of algebraic lattice codes \revised{for block fading channels, and in particular for block fading wiretap channels}.
\end{abstract}

%*****************************************************************************%
%
% INTRO
%
%*****************************************************************************%

\section{Introduction}
\label{sec:intro}

Lattices have been extensively studied for addressing the problem of coding for additive white Gaussian noise (AWGN) channels. Lattice codes are proven to be capacity-achieving (e.g.~\cite{UR,ELZ}), and several lattice decoding algorithms are available (e.g.~\cite{VB,SFS}). 
A practical way of encoding lattices is via Construction A \cite{splag,ELZ}, that is, when lattices are constructed from a binary code. Construction A considers a binary code $C$ of length $N$ and the mapping $\rho$ from $\ZZ^N$ to $\FF_2^N$ by reduction modulo 2. Now, the preimage $\rho^{-1}(C)$ (or $\rho^{-1}(C)/\sqrt{2}$) of $C$ forms a lattice in $\RR^N$. An encoding of $\rho^{-1}(C)$ is then easily obtained by using part of the bits to encode the binary code, and the rest to label the lattice $\ZZ^N$.
There have been recent generalizations of constructions of lattices from codes, for example in \cite{bw} where Barnes-Wall lattices are obtained from linear codes over some polynomial rings, resulting in an explicit method of bit-labeling complex Barnes-Wall lattice codes, which in turn allows an efficient decoding of these lattice codes.

Construction A provides a natural way to implement coset encoding for lattices. 
Lattice-type coset codes can be easily implemented and are proven to have an excellent coding gain \cite{F1,F2}. Coset encoding is also useful in the context of Gaussian wiretap lattice codes \cite{osb}. Indeed, the underlying idea of wiretap encoding is to have two nested lattices where the fine lattice is represented as a union of cosets of the coarse one. Information symbols are used to label the cosets, and random bits are introduced to pick a lattice point at random within this coset. This randomized encoding is meant to provide confidentiality between the two legitimate players, in the presence of an eavesdropper.

The theory behind Construction A is also well understood. 
There is a series of dualities between theoretical properties of codes and that of the resulting lattices, linking for example the dual of the code to the dual of the lattice, or the weight enumerator of the code to the theta series of the lattice (see e.g. \cite{splag,Ebeling}). 
Construction A has been generalized in different directions, for example a generalized construction from the cyclotomic field $\QQ(\zeta_p)$, $p$ a prime, is presented in \cite{Ebeling}, while constructions using linear codes over rings have also been considered (e.g.\cite{BDHO}). \revised{There is consequently a rich literature studying Construction A, over different alphabets, from different angles: the modular form one delves into weight enumerators and theta series (e.g. \cite{CDL}), while the lattice point of view aims at getting extremal or modular lattices \cite{Bachoc,BSC}.}

Lattices have also been considered for transmission over fading channels. 
Specifically, algebraic lattices, defined as lattices obtained via the ring of integers of a number field, provide efficient modulation schemes \cite{Oggier-2} for fast Rayleigh fading channels. Families of algebraic lattices are known to reach full diversity, the first design criterion for fading channels. Algebraic lattice codes are then natural candidates for the design of \revised{codes for block fading channels in general, and }
wiretap codes for both fast fading and block fading channels in particular \cite{icc,BO13}: by taking two fully diverse nested lattices built over a number field, reliability for the legitimate user is taken care of, but coset encoding is needed to confuse the eavesdropper, addressing the question of performing coset encoding of lattices built over number fields.

In this paper, we consider a generalized construction of lattices over a number field from linear codes, that generalizes the construction on the cyclotomic field $\QQ(\zeta_p)$, $p$ a prime, exposed in \cite{Ebeling}. \revised{We give a general definition of Construction A in Section \ref{sec:latt}, and quickly focus on number fields which have a prime which totally ramifies, describing discriminant and generator matrix in Subsection \ref{subsec:ram-mat}, and connections between codes and lattices in Subsection \ref{subsec:ram-prop}. Particular constructions are thus obtained from cyclotomic fields and their subfields, as discussed in Subsection \ref{subsec:cyc}.}
 
\revised{
We are furthermore interested in the application of this construction to lattice coset encoding for block fading channels. Totally real subfields of cyclotomic fields are known to provide diversity when used for transmission over fading channels. The proposed Construction A yields a coset encoding scheme for block fading, which inherits the full diversity property from the chosen underlying number field. \WK{The details of this application are discussed in Section \ref{sec:wc}.} As a special case, we describe in Subsection \ref{subsec:wiretap} how the presented construction enables the encoding of wiretap lattice codes over a block fading wiretap channel.
}

%**********************************************************************%
%
% 
%
%**********************************************************************%
\section{Lattices and Codes from Number Fields}
\label{sec:latt}

We first recall the definition of an integral lattice and outline basic results from algebraic number theory.

\begin{defn}\label{def:latt}
An {\it integral lattice} $\Gamma$ is a free $\ZZ$-module of finite rank together with 
a positive definite symmetric bilinear form $\langle ~,~ \rangle: \Gamma \times \Gamma \rightarrow \ZZ$.
\end{defn}

A lattice $\Gamma\subset\RR^m$ is said to have full rank if the rank of $\Gamma$ as a $\ZZ$-module is $m$. We will not consider lattices which are not integral or are not of full rank in what follows.

\begin{defn}\label{def:dual}
Let $\Gamma\subset\RR^m$ be an integral lattice.
The {\it dual lattice} $\Gamma^*$ of $\Gamma$ is 
\[
\Gamma^*=\{y\in\RR^m~|~y \cdot x \in \ZZ \mbox{ for all }x\in\Gamma\},
\]
where $\cdot$ is the usual inner product.
When $\Gamma=\Gamma^*$, the lattice $\Gamma$ is said to be {\it unimodular}.
\end{defn}

\begin{defn}
The discriminant of a lattice $\Gamma$, denoted disc($\Gamma$), is the determinant of $MM^T$ where $M$ is a generator matrix for $\Gamma$. The volume $\vol(\RR^n/\Gamma)$ of a lattice $\Gamma$ is defined to be 
$|\det(M)|$.
\end{defn}

Thus the discriminant is related to the volume of a lattice by 
\[
\vol(\RR^n/\Gamma)=\sqrt{disc(\Gamma)}.
\]
Moreover, when $\Gamma$ is integral, we have disc($\Gamma$) = $|\Gamma^*/\Gamma|$.

We are interested in integral lattices built over number fields, that is, finite field extensions of $\QQ$. Let $K$ be a number field of degree $n$. There exist $n$ embeddings $\sigma_1,\ldots,\sigma_n$ from $K$ into $\CC$. The signature of $K$ is the pair $(r_1,r_2)$ where $r_1$ and $r_2$ are the number of real embeddings and pairs of complex embeddings of $K$ respectively. Note that we have $n=r_1+2r_2$. The field $K$ is said to be totally real if $r_1=n$ and $r_2=0$. A CM-field is a totally imaginary quadratic extension of a totally real number field.
 
The ring of integers $\Oc_K$ of $K$ is a free $\ZZ$-module of rank $n$.
Thus, there exists an integral basis $\{\nu_1,\ldots,\nu_n\}$ of $\Oc_K$ such that every element $w$ of $\Oc_K$ can be written 
as $w=\sum_{i=1}^n w_i \nu_i$ where $w_i \in \ZZ$. Every ideal of $\Oc_K$ has a unique decomposition as a product of prime ideals. 
Let $p$ be a prime in $\ZZ$, then $p\Oc_K=\prod_{i=1}^g \pf_i^{e_i}$
where $\pf_i$ are prime ideals of $\Oc_K$, said to be above $p$. The number $g$ of prime ideals above $p$ and the ramification indices $e_i$ are related to the degree $n$ of the number field $K$ by the formula $n=\sum_{i=1}^g e_if_i$
where $f_i$ is the inertial degree, i.e., the degree of the finite field $\Oc_K/\pf_i$ over $\FF_p$. When $K$ is a Galois extension, this simplifies to $n=efg$.
In other words, $e_i=e$ and $f_i=f$ for all $i$. 

Let $\FF_q$ be the finite field with $q$ elements, where $q$ is a prime power. We use the term code to mean an $(N,k)$ linear code over $\FF_q$, that is, $C$ is a $k$-dimensional subspace of $\FF_q^N$. The dual code $C^\perp$ of $C$ is  given by 
\[
C^\perp=\{ y \in \FF_q^N ~|~ y \cdot x =0 \mbox{ for all }x\in C\}.
\]
\WK{The code $C$ is said to be {\it self-dual} when $C=C^\perp$ and {\it self-orthogonal} when $C\subset C^\perp$.}

%***************************************************************************************%
\subsection{A General Lattice Construction}

Given a number field $K$ and a prime $\pf\in\Oc_K$ above $p$ where $\Oc_K/\pf\simeq \FF_{p^f}$, let $C$ be an $(N,k)$ linear code over $\FF_{p^f}$. We define $\Gamma_C$ to be the ``preimage'' of $C$ in $\Oc_K^N$. The precise definition is given as follows.
\begin{defn}\label{def:Gamma}
Let $\rho:\Oc_K^N \rightarrow \FF_{p^f}^N$ be the mapping defined by the reduction modulo the ideal 
$\pf$ in each of the $N$ coordinates. Define
\[
\Gamma_C := \rho^{-1}(C) \subset \Oc_K^N.
\]
\end{defn}
To see why $\Gamma_C$ forms a lattice, we first note that $\rho^{-1}(C)$ is a subgroup of $\Oc_K^N$ since $C$ is a subgroup of $\FF_{p^f}^N$. Furthermore, $\Oc_K^N$ is a free $\ZZ$-module of rank $nN$, and so it follows that $\rho^{-1}(C)$ is also a free $\ZZ$-module. Now, since $|\Oc_K^N/\pf^N|<\infty$, $\rho^{-1}(C)$ and $\Oc_K^N$ must have the same rank as a $\ZZ$-module. We conclude that $\rho^{-1}(C)$ is a $\ZZ$-module of rank $nN$.

There are variations of the bilinear form $\langle ~,~ \rangle: \Gamma \times \Gamma \rightarrow \ZZ$ depending on the field $K$ and the code $C$. We now focus on the case when $K$ is totally real. In other words, the signature of $K$ is $(n,0)$. Let $x=(x_1,\ldots,x_N)$ and $y=(y_1,\ldots,y_N)$ be vectors in $\Oc_K^N$. Then, $\rho^{-1}(C)$ forms a lattice with the symmetric bilinear form 
\begin{equation}\label{eq:trtr}
\langle x,y \rangle = \sum_{i=1}^N \tr_{K/\QQ}(\alpha x_i y_i),
\end{equation}
where $\alpha\in \Oc_K$ is totally positive, meaning that $\sigma_i(\alpha)>0$ for all $i$. The totally positive condition ensures that the trace form is positive definite; if $\sigma_i(\alpha)>0$ for all $i$ and $x$ is not the zero vector, then, by the definition of trace,
\begin{align*}
\langle x,x \rangle &= \sum_{i=1}^N \tr_{K/\QQ}(\alpha x_i x_i) \\
                    &= \sum_{i=1}^N \sum_{j=1}^n \sigma_j(\alpha)\sigma_j(x_i)^2 >0.
\end{align*}
If $\alpha\in\Oc_K$ then $\tr_{K/\QQ}(\alpha x_i y_i)$ belongs to $\ZZ$ for $i$ since $x_i,y_i \in \Oc_K$. Consequently, $\langle x,y \rangle$ is an integer. Thus, a totally positive $\alpha\in\Oc_K$ guarantees that $\Gamma_C$ together with the bilinear form \eqref{eq:trtr} is an integral lattice. Though, as will be shown later, depending on the code $C$, other choices of $\alpha$ might be possible.

Variations of the above construction have been considered in the literature. When $N=1$ (i.e., codes are not involved) the problem reduces to understanding which lattices can be obtained on the ring of integers of a number field, see \cite{eva} for example. \WK{In \cite{Ebeling}, Ebeling  considers the construction when $K$ is a cyclotomic field $\QQ(\zeta_p)$, and we will outline this construction in Example \ref{ex:Ebeling}. Its maximal real subfield $\QQ(\zeta_p+\zeta_p^{-1})$ is studied in \cite{CDL} and \cite{isit}. In \cite{CDL}, the reduction is done by the ideal $(2m)$, yielding codes over a ring of polynomials with coefficients modulo $2m$ whereas in \cite{isit} the reduction is done by the ideal $(2-\zeta_p+\zeta_p^{-1})$ and the resulting codes are over $\FF_p$. Alternatively, quadratic extensions $K=\QQ(\sqrt{-l})$ are considered for example in \cite{Bachoc,DKL} where the reduction is done by the ideal $(p^e)$ and the resulting codes are over the ring $\Oc_K/p^e\Oc_K$. In general, if $K$ is a number field and $\pf$ is a prime above $p$ with a large ramification index, then it is known \cite{itw} that the quotient $\Oc_K/p\Oc_K$ is a polynomial ring, and the ideal $\pf\Oc_K/p\Oc_K$ corresponds to one of its ideals, which in turn defines a code over the given polynomial ring. In the next subsection, we will look into the case when $\pf$ is totally ramified.}

%****************************************************************%
\subsection{The Case of a Totally Ramified Prime: Discriminant and Generator Matrix}
\label{subsec:ram-mat}

We now focus on the case where $K$ is a Galois extension and the prime $\pf$ is chosen so that $\pf$ is totally ramified. Therefore, we have $p\Oc_K=\pf^n$, $e=n$, and $f=1$.

Now, let $C \subset \FF_p^N$ be a linear code over $\FF_p$ of length $N$, and 
let $\rho:\Oc_K^N \rightarrow \FF_p^N$ be the mapping defined by the reduction modulo the prime ideal $\pf$ on every coordinate. We consider (see Definition \ref{def:Gamma}) the lattice
\[
\Gamma_C := \rho^{-1}(C) \subset \Oc_K^N.
\]
We know from the previous subsection that $\Gamma_C$ is a lattice of rank $nN$. 

A generator matrix for the lattice $\Gamma_C=\rho^{-1}(C)$ is computed next. Recall that each of the $N$ coordinates of a lattice point $x=(x_1,\ldots,x_N)\in\Gamma_C$ is an element of $\Oc_K$ since $\Gamma_C \subset \Oc_K^N$. Here, $\Gamma_C$ has rank $nN$ as a free $\ZZ$-module, so we are interested in a $\ZZ$-basis of $\Gamma_C$. Let $\{\nu_1,\ldots,\nu_n\}$ be a $\ZZ$-basis of $\Oc_K$. Then, a generator matrix for the lattice formed by $\Oc_K$ together with the standard trace form $\langle w,z \rangle =\tr_{K/\QQ}(wz)$, $w,z\in\Oc_K$, is given by
\begin{equation}\label{eq:M}
M:=
\begin{pmatrix}
\sigma_1(\nu_1) & \sigma_2(\nu_1) & \ldots & \sigma_n(\nu_1) \\
\vdots          & \vdots          &        & \vdots \\
\sigma_1(\nu_n) & \sigma_2(\nu_n) & \ldots & \sigma_n(\nu_n) 
\end{pmatrix}
\end{equation}
since $MM^T=\tr_{K/\QQ}(\nu_i\nu_j)$. A vector $w$ in this lattice is thus a linear combination  $w=\sum_{i=1}^n w_i \nu_i$ of the rows of $M$ which is then embedded into $\RR^n$ as $(\sigma_1(\sum_{j=1}^nw_j\nu_j),\ldots,\sigma_n(\sum_{i=j}^nu_j\nu_j))$, and 
\[
\langle w,z \rangle = \tr_{K/\QQ}(wz)
\]
as it should be (this is the bilinear form (\ref{eq:trtr}) when $N=1$ and $\alpha=1$). 

We now give one last ingredient before we derive a generator matrix for the lattice $\Gamma_C$. Recall that the prime ideal $\pf$ is a $\ZZ$-module of rank $n$. It then has a $\ZZ$-basis $\{\mu_1,\ldots,\mu_n\}$ where $\mu_i=\sum_{j=1}^n \mu_{ij}\nu_j$, $\mu_{ij}\in\ZZ$, so that
\[
\begin{pmatrix}
\sigma_1(\mu_1) & \ldots & \sigma_n(\mu_1)\\
\vdots & & \vdots \\
 \sigma_1(\mu_n) & \ldots & \sigma_n(\mu_n) \\
\end{pmatrix}
=
\begin{pmatrix}
\sum_{j=1}^n \mu_{1j}\sigma_1(\nu_j) & \ldots &  \sum_{j=1}^n \mu_{1j}\sigma_n(\nu_j)\\
\vdots & & \vdots \\
\sum_{j=1}^n \mu_{nj}\sigma_1(\nu_j) & \ldots & \sum_{j=1}^n \mu_{nj}\sigma_n(\nu_j) \\
\end{pmatrix}
=
DM
\]
where $D:=(\mu_{ij})_{i,j=1}^n$.

\begin{prop}\label{prop:MC}
The lattice $\Gamma_C$ is a sublattice of $\Oc_K^N$ with discriminant 
\[
disc(\Gamma_C)=d_K^{N}(p^f)^{2(N-k)}
\]
where $d_K=(\det(\sigma_i(w_j))_{i,j=1}^n)^2$ is the discriminant of $K$.
The lattice $\Gamma_C$ is given by the generator matrix
\[
M_C=
\begin{pmatrix}
I_k \otimes M  & A \otimes M \\
{\bf 0}_{n(N-k),nk} & I_{N-k}\otimes DM \\
\end{pmatrix}
\]
where $\otimes$ is the tensor (also known as Kronecker) product of matrices, 
$(I_k~~A)$ is a generator matrix of $C$, $M$ is the matrix of embeddings of a $\ZZ$-basis of $\Oc_K$ given in (\ref{eq:M}), and $DM$ is the matrix of embeddings of a $\ZZ$-basis of $\pf$.
\end{prop}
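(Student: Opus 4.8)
The plan is to establish the generator matrix $M_C$ and the discriminant formula together, linking them through the fact that the covolume of a full-rank sublattice equals the covolume of the ambient lattice times the index. First I would record that the canonical embedding applied in each of the $N$ coordinates sends $\Gamma_C$ to a full-rank lattice in $\RR^{nN}$ whose Gram matrix for the form \eqref{eq:trtr} with $\alpha=1$ is exactly $M_C M_C^T$; indeed the dot product of the embedded vectors is $\sum_i\sum_j\sigma_j(x_i)\sigma_j(y_i)=\sum_i\tr_{K/\QQ}(x_iy_i)$. Hence $disc(\Gamma_C)=\det(M_C)^2$ and $\vol(\RR^{nN}/\Gamma_C)=|\det(M_C)|$, so it suffices to pin down $|\det(M_C)|$ and to confirm that the rows of $M_C$ really generate $\Gamma_C$.

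For the generator matrix I would first verify that each row of $M_C$ is a point of $\Gamma_C$. A top block row fixes a free coordinate $x_i=\nu_l$ (for $i\le k$), sets the remaining free coordinates to $0$, and sets the dependent coordinate $x_{k+j}$ to an integer lift of $A_{ij}$ times $\nu_l$; its reduction modulo $\pf$ is $\bar{\nu}_l$ times the $i$-th row of $(I_k~A)$, i.e.\ a codeword, so the row lies in $\Gamma_C$. A bottom block row places a $\ZZ$-basis element $\mu_l$ of $\pf$ in a single dependent coordinate and $0$ elsewhere; since $\mu_l\in\pf$ its reduction is the zero codeword, so it too lies in $\Gamma_C$. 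Here one should note that the chosen integer lift of $A_{ij}$ is immaterial: two lifts differ by a multiple of $p$, and $p\nu_l\in p\Oc_K=\pf^n\subseteq\pf$ is absorbed by the bottom block, so $M_C$ spans the same lattice for any choice of lift.

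Next I would compute the two quantities to be matched. Since $M_C$ is block upper triangular with diagonal blocks $I_k\otimes M$ and $I_{N-k}\otimes DM$, its determinant is $\det(M)^k\det(DM)^{N-k}=\det(M)^N\det(D)^{N-k}$. Using $\det(M)^2=d_K$ and $|\det(D)|=[\Oc_K:\pf]=p^f=p$ (the prime being totally ramified with $f=1$), this gives $|\det(M_C)|=d_K^{N/2}\,p^{N-k}$. On the other hand, $\rho$ induces an isomorphism $\Oc_K^N/\Gamma_C\cong\FF_p^N/C$, so $[\Oc_K^N:\Gamma_C]=p^{N-k}$, whence $\vol(\RR^{nN}/\Gamma_C)=[\Oc_K^N:\Gamma_C]\,\vol(\RR^{nN}/\Oc_K^N)=p^{N-k}d_K^{N/2}$, the covolume of $\Oc_K^N$ being $|\det M|^N=d_K^{N/2}$.

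Finally, the rows of $M_C$ span a sublattice $L\subseteq\Gamma_C$ with $\vol(\RR^{nN}/L)=|\det(M_C)|=d_K^{N/2}p^{N-k}=\vol(\RR^{nN}/\Gamma_C)$; a containment of full-rank lattices with equal covolume is an equality, so $L=\Gamma_C$ and $M_C$ is a generator matrix. The discriminant then follows immediately as $disc(\Gamma_C)=\det(M_C)^2=d_K^N p^{2(N-k)}=d_K^N(p^f)^{2(N-k)}$. I expect the only genuinely delicate step to be the verification that the top rows reduce to the correct codewords together with the lift-independence remark, since this is where the interplay between the $\FF_p$-entries of $A$, their lifts to $\Oc_K$, and the containment $p\Oc_K\subseteq\pf$ is used; the identity $|\det D|=[\Oc_K:\pf]=p^f$ should be stated explicitly, as it is precisely what turns the index $p^{N-k}$ into the exponent of $p^f$ appearing in the discriminant.
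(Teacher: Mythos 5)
Your proposal is correct and follows essentially the same route as the paper's proof: check that the rows of $M_C$ reduce modulo $\pf$ to codewords (top block) or to zero (bottom block), compute $|\det M_C|=\sqrt{d_K}^N(p^f)^{N-k}$ from the block-triangular structure using $|\det D|=N(\pf)=p^f$, and match this against the covolume of $\Gamma_C$ obtained from the index $[\Oc_K^N:\Gamma_C]=(p^f)^{N-k}$ given by surjectivity of $\rho$, concluding equality of the two full-rank lattices. Your explicit lift-independence observation (two lifts of an entry of $A$ differ by a multiple of $p$, absorbed since $p\Oc_K=\pf^n\subseteq\pf$) sharpens a point the paper only mentions in a remark after the proposition, but the underlying argument is the same.
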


\begin{proof}
The bilinear form $\langle w,z \rangle =\tr_{K/\QQ}(wz)$, $w,z\in\Oc_K$, has determinant $d_K$ over $\Oc_K$ since $d_K=\det(MM^T)$ by definition. Thus, the bilinear form $\langle x,y \rangle = \sum_{i=1}^N\tr(x_iy_i)$ has determinant $d_K^N$ over $\Oc_K^N$.
The map $\rho$ of reduction$\mod \pf$ is surjective, and $\rho^{-1}(C)$ is of index $(p^f)^{N-k}$,  therefore the discriminant of $\Gamma_C$ is
\[
disc(\Gamma_C)=d_K^N(p^f)^{2N-2k}.
\]
It is clear from the shape of the generator matrix $M_C$ that this lattice has the right rank. Note that the first $nk$ row of $M_C$ correspond to an embedding of a basis for $C$ and the last $n(N-k)$ rows of $M_C$ correspond to an embedding of a basis of $\pf$. To make this more precise, let us write $u_i=(u_{i1},\ldots,u_{in})\in\ZZ^n$ where $x_i=\sum_{l=1}^nu_{il}\nu_l$, $i=1,\ldots,N$, and define $\sigma=(\sigma_1,\ldots,\sigma_n):\Oc_K \rightarrow \RR^{n}$ to be the canonical embedding of $K$. We have
\[
\sigma_j(x_i)=\sigma_j\left(\sum_{l=1}^nu_{il}\nu_l\right)=u_i \cdot (M_{lj})_{l=1}^n
\]
and
\begin{eqnarray*}
&&
(u_1,\ldots,u_k,u_{k+1},\ldots,u_{N})
\begin{pmatrix}
I_k\otimes M & A \otimes M \\
{\bf 0}_{n(N-k),nk} & I_{N-k}\otimes DM \\
\end{pmatrix}\\
&=&
(\sigma(x_1),\ldots,\sigma(x_k),\sum_{j=1}^ka_{j,1}\sigma(x_j)+\sigma(x'_{k+1}),\ldots,
\sum_{j=1}^ka_{j,N-k}\sigma(x_j)+\sigma(x'_N))
\end{eqnarray*}
where $x'_{k+1},\ldots,x'_N$ are in the ideal $\pf$.
It is not hard to see that the above vector is an element in $\Gamma_C$. Indeed, if we define
\[
\psi:\sigma(x_i)=(\sigma_1(x_i),\ldots,\sigma_n(x_i)) \mapsto x_i=\sum_{l=1}^nu_{il}\nu_l \in \Oc_K,
\]
then applying $\psi$ and $\rho$ componentwise in order gives
\[
c=(\rho(\psi(\sigma(x_1))),\ldots,\rho(\psi(\sigma(x_k))),\sum_{j=1}^ka_{j,1}\rho(\psi(\sigma(x_j))),\ldots,
\sum_{j=1}^ka_{j,N-k}\rho(\psi(\sigma(x_j)))),
\]
since $x'_i$ reduces to zero modulo $\pf$. Now, $c$ is a codeword of the code $C$ given by
\[
c=(\rho(\psi(\sigma(x_1))),\ldots,\rho(\psi(\sigma(x_k)))\cdot (I_k~~A).
\]

Finally, the absolute value of the determinant of $M_C$ can be computed as
\begin{align*}
|\det(M_C)| &= |\det(I_k \otimes M)\det(I_{N-k}\otimes DM)| \\
            &= |\det(M)|^k|\det(DM)|^{N-k} \\
            &= |\det(M)|^N |\det(D)|^{N-k}\\
            &= \sqrt{d_K}^NN(\pf)^{N-k}\\
            &=\sqrt{d_K}^N (p^f)^{N-k}, 
\end{align*}
showing that $M_C$ generates a lattice with the same volume as $\Gamma_C$, which completes the proof of the proposition.
\end{proof}

\WK{We remark that the advantage of considering the case of a prime which is totally ramified is that the matrix $A$ can be easily lifted, since it has coefficients in $\FF_p$. %A similar matrix holds in general, but care should be taken in lifting the matrix $A$.
}

Now, for $x=(x_1,\ldots,x_N)\in \Gamma_C \subset \Oc_K^N$, we have that $x_i=\sum_{j=1}^nx_{ij}\nu_j$ for $i=1,\ldots,k$, and the above result also tells us that $x$ is embedded into $\RR^{nN}$ as
\begin{eqnarray}
x&\!\!=\!\!&(\sigma(x_1),\ldots,\sigma(x_k),\sum_{j=1}^ka_{j,1}\sigma(x_j)+\sigma(x'_{k+1}),\ldots,
\sum_{j=1}^ka_{j,N-k}\sigma(x_j)+\sigma(x'_N)) \nonumber \\
&\!\!=\!\!&(\sigma_1(x_1),\ldots,\sigma_n(x_1),\ldots,\sigma_1(x_N),\ldots,\sigma_n(x_N)) \label{eq:x}
\end{eqnarray}
where $x_{k+1}=\sum_{j=1}^ka_{j,1}x_j+x'_{k+1},\ldots,x_N=\sum_{j=1}^ka_{j,N-k}x_j+x'_N$.
Then,
\[
\langle x,y \rangle = \sum_{i=1}^N\tr_{K/\QQ}(x_iy_i).
\]

\begin{cor}\label{cor:disc}
The lattice $\Gamma_C$ has Gram matrix
\[
\begin{pmatrix}
GG^T\otimes \tr(\nu_i\nu_j) & A \otimes \tr(\mu_i\nu_j) \\
A^T \otimes \tr(\mu_i\nu_j) & I_{N-k}\otimes \tr(\mu_i\mu_j) \\
\end{pmatrix}
\]
where $\mu_1,\ldots,\mu_n$ is a $\ZZ$-basis of $\pf$, and $G=(I_k~~A)$.
In particular, $\Gamma_C$ is an integral lattice.
\end{cor}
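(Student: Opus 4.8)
The plan is to observe that the displayed matrix is simply $M_CM_C^T$ for the generator matrix $M_C$ of Proposition \ref{prop:MC}, and to obtain it by a block computation. Since $K$ is totally real, for $w,z\in\Oc_K$ we have $\tr_{K/\QQ}(wz)=\sum_{j=1}^n\sigma_j(w)\sigma_j(z)=\sigma(w)\cdot\sigma(z)$, so the bilinear form $\langle x,y\rangle=\sum_{i=1}^N\tr_{K/\QQ}(x_iy_i)$ coincides with the ordinary Euclidean inner product of the images under the embedding (\ref{eq:x}). Consequently the Gram matrix of $\Gamma_C$ is exactly $M_CM_C^T$, and it suffices to expand this product.

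First I would record the three identities that translate embedding matrices into trace-form matrices. From $MM^T=(\tr_{K/\QQ}(\nu_i\nu_j))$, already noted after (\ref{eq:M}), and from the fact established just before Proposition \ref{prop:MC} that $DM$ is the matrix of embeddings $\sigma_j(\mu_i)$ of the basis $\mu_1,\dots,\mu_n$ of $\pf$, one reads off $(DM)(DM)^T=(\tr(\mu_i\mu_j))$ and $(DM)M^T=(\tr(\mu_i\nu_j))$, each entry being of the form $\sum_l\sigma_l(\cdot)\sigma_l(\cdot)=\tr(\cdot\,\cdot)$. I would then expand $M_CM_C^T$ in $2\times2$ blocks, using repeatedly the mixed-product rule $(P\otimes Q)(R\otimes S)=(PR)\otimes(QS)$ together with $(P\otimes Q)^T=P^T\otimes Q^T$. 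The top-left block is $(I_k\otimes M)(I_k\otimes M)^T+(A\otimes M)(A\otimes M)^T=(I_k+AA^T)\otimes MM^T$; since $G=(I_k~~A)$ gives $GG^T=I_k+AA^T$, this equals $GG^T\otimes\tr(\nu_i\nu_j)$. The bottom-right block is $(I_{N-k}\otimes DM)(I_{N-k}\otimes DM)^T=I_{N-k}\otimes(DM)(DM)^T=I_{N-k}\otimes\tr(\mu_i\mu_j)$. The off-diagonal blocks come from $(I_{N-k}\otimes DM)(A\otimes M)^T=A^T\otimes(DM)M^T=A^T\otimes\tr(\mu_i\nu_j)$ and its transpose, giving $A\otimes\tr(\mu_i\nu_j)$ in the top-right position. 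This reproduces the displayed matrix.

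Integrality then follows for free: the entries $\tr(\nu_i\nu_j)$, $\tr(\mu_i\mu_j)$ and $\tr(\mu_i\nu_j)$ are traces of products of algebraic integers, hence lie in $\ZZ$, while the lifted matrix $A$ has integer entries, so $GG^T$, $AA^T$ and all four blocks are integer matrices; therefore $\langle x,y\rangle\in\ZZ$ for all $x,y\in\Gamma_C$.

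The step requiring the most care is purely bookkeeping rather than conceptual: one must keep the Kronecker factors in the correct order, with the left factor indexing the codeword coordinates and the right factor indexing the $\Oc_K$-coordinates, and match the off-diagonal block against the schematic factor $\tr(\mu_i\nu_j)$. Here a transpose on the small $n\times n$ trace factor is implicit, but it is harmless since it is forced by the symmetry of the Gram matrix; no idea is needed beyond the mixed-product identity and the dictionary between embeddings and trace forms.
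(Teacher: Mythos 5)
Your proposal is correct and takes essentially the same route as the paper: both compute the Gram matrix as $M_CM_C^T$, expand it in $2\times 2$ blocks via the mixed-product rule for Kronecker products, identify $MM^T$, $(DM)M^T$ and $(DM)(DM)^T$ with the matrices $\tr(\nu_i\nu_j)$, $\tr(\mu_i\nu_j)$ and $\tr(\mu_i\mu_j)$, and deduce integrality from the fact that traces of products of algebraic integers lie in $\ZZ$. Your closing remark about the implicit transpose on the small trace factor in the off-diagonal block is accurate, and in fact the paper's displayed statement glosses over exactly the same point (its top-right block $A\otimes MM^TD^T$ is literally $A\otimes\tr(\nu_i\mu_j)$), so your extra care there only makes the bookkeeping more honest.
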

\begin{proof}
By definition, the Gram matrix of $\Gamma_C$ is
\begin{eqnarray*}
M_CM_C^T &\!\!=\!\!&
\begin{pmatrix}
I_k\otimes M & A \otimes M \\
{\bf 0}_{n(N-k),nk} & I_{N-k}\otimes DM \\
\end{pmatrix}
\begin{pmatrix}
I_k\otimes M^T & {\bf 0}_{nk,n(N-k)}  \\
A^T \otimes M^T & I_{N-k}\otimes M^TD^T \\
\end{pmatrix}\\
&\!\!=\!\!&
\begin{pmatrix}
(I_k+AA^T)\otimes MM^T & A \otimes MM^TD^T \\
A^T \otimes DMM^T & I_{N-k}\otimes DMM^TD^T \\
\end{pmatrix}\\
&\!\!=\!\!&
\begin{pmatrix}
GG^T\otimes \tr(\nu_i\nu_j) & A \otimes \tr(\mu_i\nu_j) \\
A^T \otimes \tr(\mu_i\nu_j) & I_{N-k}\otimes \tr(\mu_i\mu_j) \\
\end{pmatrix}.
\end{eqnarray*}
It follows that the entries of this matrix are integers, and $\Gamma_C$ is an integral lattice.
\end{proof}

A similar construction is obtained from a CM-field. We provide an outline next, \revised{with fewer} details, since we will be mostly interested in the totally real case. Recall that a CM-field is a totally imaginary quadratic extension of a totally real number field. If $K$ is a CM-field and $\alpha\in \Oc_K \cap \RR$ is totally positive, then $\rho^{-1}(C)$ forms a lattice with the symmetric bilinear form 
\begin{equation}\label{eq:trcm}
\langle x,y \rangle = \sum_{i=1}^N \tr_{K/\QQ}(\alpha x_i \bar{y}_i),
\end{equation}
where $\bar{y}_i$ denotes the complex conjugate of $y_i$. 
Since $\sigma_i$ commutes with the complex conjugation and $\sigma_i(\alpha)>0$ for all $i$, 
we have for $x$ not the zero vector that
\begin{align*}
\langle x,x \rangle &= \sum_{i=1}^N \tr_{K/\QQ}(\alpha x_i \bar{x}_i) \\
                    &= \sum_{i=1}^N \sum_{j=1}^n \sigma_j(\alpha)|\sigma_j(x_i)|^2 >0. 
\end{align*}
Again, whether $\tr_{K/\QQ}(\alpha x_i \bar{y}_i) \in \ZZ$ depends on the choice of $\alpha$, and $\alpha\in\Oc_K$ is a sufficient condition, as that makes  $\alpha x_i \bar{y}_i \in \Oc_K$ for all $x_i,y_i\in\Oc_K$. A generator matrix for this lattice is obtained similarly as above, with the exception that now $\sigma_{r_1+1},\ldots,\sigma_{n}$ are complex embeddings. One thus chooses one complex embedding per pair and separates its real and imaginary parts to get \[\sigma=(\sigma_1,\ldots,\sigma_{r_1},\Re(\sigma_{r_1+1}),\Im(\sigma_{r_1+1}),\ldots,\Re(\sigma_{r_1+r_2})\Im(\sigma_{r_1+r_2}))\]
and
\[
x=(\sigma_1(\sum_{j=1}^n u_j\nu_j),\ldots,\Re(\sigma_{r_1+1}(\sum_{j=1}^n u_j\nu_j)),
\ldots, \Im(\sigma_{r_1+r_2}(\sum_{j=1}^n u_j\nu_j))).
\]

%****************************************************************%
\subsection{\WK{The Case of a Totally Ramified Prime and Self-Orthogonal Codes}}
\label{subsec:ram-prop}

We know from the previous subsection that $\Gamma_C$ is an integral lattice of rank $nN$ with respect to the bilinear form $\langle x,y \rangle = \sum_{i=1}^N \tr_{K/\QQ}(\alpha x_i \bar{y}_i)$ for a totally positive element $\alpha\in\Oc_K\cap \RR$. If $K$ is totally real, then $\bar{y}_i=y_i$, and this notation allows us to treat both the cases of totally real and CM-fields at the same time. \WK{In this subsection, we derive some properties of the lattice
\[
\Gamma_C := \rho^{-1}(C) \subset \Oc_K^N
\]
when $C$ is self-orthogonal, i.e. $C\subset C^\perp$.}

Let $C$ be an $(N,k)$ linear code over $\FF_p$. We will show next that if $C\subset C^\perp$, 
then $\sum_{i=1}^N \tr_{K/\QQ}(x_i \bar{y}_i)\in p\ZZ$, and thus we can normalize the symmetric bilinear form by a factor of $1/p$, or equivalently, choose $\alpha=1/p$.

\begin{lem}\label{lem:int}
Let $C\subset \FF_p^N$ be a code such that $C\subset C^\perp$. Then, $\Gamma_C$ is an integral lattice with respect to the bilinear form $\langle x,y \rangle = \sum_{i=1}^N \tr_{K/\QQ}(x_i \bar{y}_i/p)$. 
\end{lem}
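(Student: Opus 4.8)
The plan is to separate positive\-definiteness from integrality and to concentrate on the latter. Positive\-definiteness is already available: the Lemma's form is \eqref{eq:trcm} with $\alpha = 1/p$, and since $1/p$ is a totally positive rational (indeed $\sigma_j(1/p)=1/p>0$ for all $j$), the computation preceding the Lemma gives $\langle x,x\rangle>0$ for $x\neq 0$. So it suffices to show the integrality statement, namely that for all $x=(x_1,\ldots,x_N)$ and $y=(y_1,\ldots,y_N)$ in $\Gamma_C$ one has
\[
\sum_{i=1}^N \tr_{K/\QQ}(x_i\bar y_i)\in p\ZZ,
\]
so that dividing by $p$ yields an element of $\ZZ$ and, by Corollary~\ref{cor:disc}-type integrality of the conjugates, an integral Gram matrix.

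The arithmetic heart is the claim $\tr_{K/\QQ}(\pf)\subseteq p\ZZ$, which is exactly where total ramification and the Galois hypothesis are used. Since $K/\QQ$ is Galois and $\pf$ is totally ramified, $\pf$ is the \emph{unique} prime of $\Oc_K$ above $p$ and is therefore stable under every element of $\mathrm{Gal}(K/\QQ)$; as the $\sigma_j$ are precisely these automorphisms, $\sigma_j(w)\in\pf$ for every $w\in\pf$, whence $\tr_{K/\QQ}(w)=\sum_j\sigma_j(w)\in\pf$. Being also a rational integer, it lies in $\pf\cap\ZZ=p\ZZ$. Consequently $\tr_{K/\QQ}(z)\bmod p$ depends only on $z\bmod\pf$, so (using $f=1$, i.e. $\Oc_K/\pf\cong\FF_p$) the trace descends to a well\-defined \emph{additive} map $\overline{\tr}\colon\FF_p\to\ZZ/p\ZZ$. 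The one extra point in the CM case is the complex conjugate: complex conjugation lies in $\mathrm{Gal}(K/\QQ)$ and fixes $\pf$ by the same uniqueness, hence induces an automorphism of the residue field $\Oc_K/\pf\cong\FF_p$; but $\mathrm{Aut}(\FF_p)$ is trivial, so $\rho(\bar y_i)=\rho(y_i)$ and therefore $\rho(x_i\bar y_i)=\rho(x_i)\rho(y_i)$, $\rho$ being a ring homomorphism.

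Assembling these, additivity of $\overline{\tr}$ gives
\[
\sum_{i=1}^N \tr_{K/\QQ}(x_i\bar y_i)\equiv \overline{\tr}\Bigl(\sum_{i=1}^N \rho(x_i)\rho(y_i)\Bigr)=\overline{\tr}\bigl(\rho(x)\cdot\rho(y)\bigr)\pmod p .
\]
Because $x,y\in\Gamma_C=\rho^{-1}(C)$, both $\rho(x)$ and $\rho(y)$ lie in $C$, and the hypothesis $C\subseteq C^\perp$ forces $\rho(x)\cdot\rho(y)=0$ in $\FF_p$; hence the right\-hand side is $\overline{\tr}(0)=0$, which is the required membership in $p\ZZ$. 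The main obstacle is the lemma $\tr_{K/\QQ}(\pf)\subseteq p\ZZ$; once it is in place everything else is formal. It is worth emphasizing that the value $\overline{\tr}(1)=\tr_{K/\QQ}(1)=n$ never enters, since self\-orthogonality annihilates the entire inner product $\rho(x)\cdot\rho(y)$ \emph{before} the trace is applied; thus any residual factor of $n$, and in particular any concern about wild ramification when $p\mid n$, is irrelevant.
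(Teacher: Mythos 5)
Your proof is correct and follows essentially the same route as the paper's: self-orthogonality gives $\rho(x)\cdot\rho(y)=0$, uniqueness of the totally ramified prime $\pf$ above $p$ forces both complex conjugation and every $\sigma_j$ to stabilize $\pf$, and hence $\tr_{K/\QQ}\bigl(\sum_{i=1}^N x_i\bar y_i\bigr)\in\pf\cap\ZZ=p\ZZ$. Your handling of the conjugation step via the induced automorphism of $\Oc_K/\pf\cong\FF_p$ being trivial is a mild streamlining of the paper's explicit decomposition $y_i=y_i'+y_i''$ with $y_i'\in\ZZ$ and $y_i''\in\pf$, but the underlying idea is identical.
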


\begin{proof}
It suffices to show that $\langle x,y \rangle$ as defined is an integer for all $x,y\in\Gamma_C$. Let $x=(x_1,\ldots,x_N)$ and $y=(y_1,\ldots,y_N)$ be elements in $\Gamma_C=\rho^{-1}(C)$. It is not hard to see that
\begin{align*}
\rho(x\cdot y) & =\rho\left(\sum_{i=1}^N x_iy_i\right) \\
& =\sum_{i=1}^N \rho(x_i)\rho(y_i) \\
& =\rho(x)\cdot \rho(y) \\
&= 0\in\FF_p
\end{align*}
where the last equality holds since $\rho(x),\rho(y)\in C$ and $C\subset C^\perp$. It follows that
\[
\sum_{i=1}^Nx_iy_i = x\cdot y \equiv 0 \pmod \pf.
\]

We are going to show next that $\bar{y_i}\equiv y_i \pmod \pf$ for all $i=1,\ldots,N$. Since $\Oc_K/\pf\cong\FF_p$, for each $i$, one can write $y_i\in\Oc_K$ as $y_i=y'_i+y''_i$ where $y'_i\in\ZZ$ and $y''_i\in \pf$. Note that $\bar{\cdot}$ is the automorphism of $K$ induced by complex conjugation. Since $K$ is a Galois extension, the Galois group acts transitively on the ideals above $p$. However, the only prime above $p$ is $\pf$, so we must have $\bar{y}''_i\in \pf$. It follows that $\bar{y_i}=y'_i+\bar{y}''_i\equiv y'_i+y''_i =y_i \mod \pf$ as desired.

Therefore,
\[
\sum_{i=1}^Nx_iy_i \equiv \sum_{i=1}^Nx_i\bar{y_i} \equiv 0 \pmod \pf.
\]
Again, as $\pf$ is the only prime above $p$, all conjugates of $\sum_{i=1}^Nx_i\bar{y}_i$ must lie in $\pf$, and so must its trace. In other words,
\[\tr_{K/\QQ}\left( \sum_{i=1}^N x_i \bar{y}_i\right)\in\pf,\]
implying that
\[\tr_{K/\QQ}\left( \sum_{i=1}^N x_i \bar{y}_i\right)\in\pf\cap \ZZ= p\ZZ.\]
Now, by linearity of the trace,
\[
\langle x,y \rangle  =  \sum_{i=1}^N \tr_{K/\QQ}(x_i \bar{y}_i/p) 
                     =  \frac{1}{p} \tr_{K/\QQ}\left( \sum_{i=1}^N x_i \bar{y}_i\right),
\]
and so we may conclude that $\Gamma_C$ is integral.
\comment{It now suffices to show that $\tr_{K/\QQ}\left( \sum_{i=1}^N x_i \bar{y}_i\right) \in p\ZZ$.  We know that $\sum_{i=1}^Nx_i\bar{y}_i \in \pf$ and again $\pf$ is the only prime above $p$. Therefore,  This concludes the proof that $\Gamma_C$ is integral, since $\tr_{K/\QQ}\left( \sum_{i=1}^N x_i \bar{y}_i\right) \in \pf \cap \ZZ= p\ZZ$.}
\end{proof}
\begin{rem}\label{rem:norm}\rm
Note that instead of considering the lattice $\rho^{-1}(C)$ with $\langle x,y \rangle = \sum_{i=1}^N\tr_{K/\QQ}(x_i\bar{y}_i/p)$, we can alternatively consider the lattice $\rho^{-1}(C)/\sqrt{p}$ with $\langle x,y \rangle = \sum_{i=1}^N\tr_{K/\QQ}(x_i\bar{y}_i)$. 
\end{rem}

Next, we derive some parameters of $\Gamma_C$ where the code $C$ is chosen so that $C\subset C^\perp$. Let $G=(I_k~~A)$ be a generator matrix for $C$. First, by Proposition \ref{prop:MC}, the generator matrix for $\Gamma_C$ is
\[
M_C=\frac{1}{\sqrt{p}}
\begin{pmatrix}
I_k \otimes M  & A \otimes M \\
{\bf 0}_{n(N-k),nk} & I_{N-k}\otimes DM \\
\end{pmatrix}.
\]
Its discriminant is then
\[
disc(\Gamma_C)=d_K^Np^{2N-2k-nN},
\]
as may be computed directly from the determinant of $M_C$, or by using Corollary \ref{cor:disc}:
\[
disc(\Gamma_C)=(\frac{1}{p})^{nN}d_K^N(p^f)^{2(N-k)}=d_K^N \frac{p^{2(N-k)}}{p^{nN}}.
\]
The Gram matrix of $\Gamma_C$ is
\[
\frac{1}{p}
\begin{pmatrix}
GG^T\otimes \tr(\nu_i\nu_j) & A \otimes \tr(\mu_i\nu_j) \\
A^T \otimes \tr(\mu_i\nu_j) & I_{N-k}\otimes \tr(\mu_i\mu_j) \\
\end{pmatrix}.
\]
We know from Lemma \ref{lem:int} that $\Gamma_C$ is an integral lattice, so we may expect entries of the Gram matrix of $\Gamma_C$ to be integers. This is certainly the case since $G$ is a generator matrix for a code $C\in C^\perp$ over $\FF_p$ (so $p$ divides every entry of $GG^T$) and $\mu_i\nu_j,\mu_i\mu_j\in\pf$ for all $i$ and $j$ (so $p$ divides every entry of $\tr(\mu_i\nu_j)$ and $\tr(\mu_i\nu_j)$).

Several particular cases of the above constructions have been considered in the literature. We provide some of them here as the following examples.

\begin{exmp}\label{ex:Ebeling}\rm
The following construction is discussed in Section 5.2 of \cite{Ebeling}.
Let $p$ be an odd prime, and let $\zeta_p$ be the primitive $p^{\mathrm{th}}$ root of unity. 
Consider the cyclotomic field $K=\QQ(\zeta_p)$  
with the ring of integers $\Oc_K=\ZZ[\zeta_p]$. The degree of $K$ over $\QQ$ is $p-1$, and $p$ is totally ramified, with $p\Oc_K=(1-\zeta_p)^{p-1}$.
Thus, take the prime ideal $\pf=(1-\zeta_p)$ with the residue field $\Oc_K/\pf \simeq \FF_p$ and the 
bilinear form $\langle x,y \rangle = \sum_{i=1}^N \tr_{K/\QQ}(x_i \bar{y_i}/p)$.   
Since $\QQ(\zeta_p)$ is a CM-field, this bilinear form corresponds to (\ref{eq:trcm}) with $\alpha=1/p$.
It was proven that, given a code $C$ over $\mathbb{F}_p$, if $C\subset C^\perp$ then $\rho^{-1}(C)$ is an integral lattice of rank $N(p-1)$. 
%and if $C$ is self-dual, then $\Gamma_C$ is unimodular.
\end{exmp}
\begin{exmp}\label{ex:a}\rm
A particularly well-known construction of lattices from codes is when $p=2$ in the previous example. In such case, $\zeta_p=-1$, $\Oc_K=\ZZ$, and $\pf=2\ZZ$, yielding the so-called Construction A (see Section 1.3 of \cite{Ebeling}). To obtain lattices of rank $N$ from binary linear codes of length $N$, we consider
\[
\Gamma_C = \frac{1}{\sqrt{2}}\rho^{-1}(C),
\]
with $\langle x,y \rangle = \sum_{i=1}^N\tr(x_iy_i)$ (see Remark \ref{rem:norm}).
A generator matrix for this lattice is
\[
M_C=
\frac{1}{\sqrt{2}}
\begin{pmatrix}
I_k & A \\
0 & 2I_{N-k} \\
\end{pmatrix},
\]
which may be obtained as a particular case of Proposition \ref{prop:MC}.
\end{exmp}

%***************************************************************************************%
\subsection{Maximal Totally Real Subfields of Cyclotomic Fields}
\label{subsec:cyc}

Since we consider the case where $\pf$ is a prime above $p$ which totally ramifies in $K$, 
cyclotomic fields and their subfields are natural candidates to study. 
Let $p$ be an odd prime, and let $\zeta_{p^r}$ be a primitive $p^r$th root of unity.

Let us start by considering $K^+=\QQ(\zeta_{p^r}+\zeta_{p^r}^{-1})$, $r\geq1$, the maximal totally real subfield of the cyclotomic field $K=\QQ(\zeta_{p^r})$, with respective rings of integers $\Oc_{K^+}=\ZZ[\zeta_{p^r}+\zeta_{p^r}^{-1}]$ and $\Oc_K=\ZZ[\zeta_{p^r}]$ \cite[p.16]{CF1}. The degree of $K^+$ over $\QQ$ is $\frac{p^{r-1}(p-1)}{2}$. 
The prime $p$ totally ramifies in $K$:
\[
p\Oc_K=\Pf^{p^{r-1}(p-1)},
\]
where $\Pf$ is a principal prime ideal with generator $1-\zeta_{p^r}$
and residue field  $\Oc_K/\pf \simeq \FF_p$. 
We write $e(\Pf|p)=p^{r-1}(p-1)$ for its ramification index, and by transitivity of 
ramification indices
\[
p^{r-1}(p-1)=e(\Pf|p)=e(\Pf|\pf)e(\pf|p)
\]
for $\pf$ the prime above $p$ in $K^+$. This is sufficient to conclude that $e(\pf|p)=p^{r-1}(p-1)/2$, and the prime $p$ also totally ramifies in $K^+=\QQ(\zeta_{p^r}+\zeta_{p^r}^{-1})$:
\[
p\Oc_{K^+}=\pf^{\frac{p^{r-1}(p-1)}{2}}
\]
with
\[
\pf = \Pf \cap \Oc_{K^+}=((1-\zeta_{p^r})(1-\zeta_{p^r}^{-1}))=(2-\zeta_{p^r}-\zeta_{p^{-r}}).
\]

\begin{lem}
Consider the number field $K=\QQ(\zeta_{p^r}+\zeta_{p^r}^{-1})$.
Let $C\subset\mathbb{F}_p^N$ be a $k$-dimensional code such that $C\subset{}C^\perp$. Then the lattice $\Gamma_C$ given in Definition \ref{def:Gamma} together with the bilinear form $\langle x,y \rangle = \sum_{i=1}^N \tr_{K/\mathbb{Q}}(\alpha x_i y_i)$ is an integral lattice of rank $Np^{r-1}(p-1)/2$. 
\end{lem}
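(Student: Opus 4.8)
The plan is to recognize this lemma as the specialization, to the field $K=\QQ(\zeta_{p^r}+\zeta_{p^r}^{-1})$, of the general machinery already developed in Subsection \ref{subsec:ram-prop}, so that almost everything follows by invoking Lemma \ref{lem:int} with the choice $\alpha=1/p$. The first step is to verify that this field satisfies all the standing hypotheses of that subsection. Indeed $K$ is a Galois extension of $\QQ$ (it is the subfield of the abelian extension $\QQ(\zeta_{p^r})$ fixed by complex conjugation, hence itself Galois over $\QQ$), it is totally real of degree $n=p^{r-1}(p-1)/2$, and, as established just above, the prime $p$ totally ramifies: $p\Oc_K=\pf^{n}$ with $\pf=(2-\zeta_{p^r}-\zeta_{p^r}^{-1})$ the \emph{unique} prime above $p$, of ramification index $e=n$, inertial degree $f=1$, and residue field $\Oc_K/\pf\cong\FF_p$. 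These are precisely the conditions under which Definition \ref{def:Gamma} and Lemma \ref{lem:int} were stated.

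For the rank, I would appeal directly to the argument following Definition \ref{def:Gamma}: since $\Oc_K^N$ is a free $\ZZ$-module of rank $nN$ and $\Gamma_C=\rho^{-1}(C)$ is a subgroup of finite index (because $|\Oc_K^N/\pf^N|<\infty$), it is again free of rank $nN=Np^{r-1}(p-1)/2$. This settles the claimed rank, which depends only on the $\ZZ$-module structure of $\Gamma_C$ and not on the bilinear form.

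For integrality I would take $\alpha=1/p$. Because $K$ is totally real, complex conjugation acts as the identity, so $\bar{y}_i=y_i$ for every coordinate and the form $\langle x,y\rangle=\sum_{i=1}^N\tr_{K/\QQ}(x_iy_i/p)$ is exactly the one treated in Lemma \ref{lem:int}. Applying that lemma---whose hypotheses $C\subset C^\perp$, $p$ totally ramified, and $\Oc_K/\pf\cong\FF_p$ are all met---gives $\langle x,y\rangle\in\ZZ$ for all $x,y\in\Gamma_C$; in the totally real case the step showing $\bar{y}_i\equiv y_i\pmod{\pf}$ is immediate rather than requiring the transitivity argument. Positive definiteness then follows from the computation around \eqref{eq:trtr}: the scalar $\alpha=1/p$ is totally positive (it is a positive rational, so $\sigma_j(\alpha)=1/p>0$ for all $j$), whence $\langle x,x\rangle=\sum_{i,j}\sigma_j(\alpha)\sigma_j(x_i)^2>0$ for $x\neq0$. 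Symmetry being clear, $\Gamma_C$ is an integral lattice of the asserted rank.

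There is no serious obstacle here, as the statement is a direct specialization of results already proved. The only point deserving explicit care is the verification that $K/\QQ$ is Galois with $\pf$ the unique prime above $p$, since it is precisely this that licenses the use of Lemma \ref{lem:int}: the proof of that lemma uses transitivity of the Galois action on the primes above $p$, together with the totally ramified structure $p\Oc_K=\pf^n$, to conclude that the relevant trace lies in $\pf\cap\ZZ=p\ZZ$. Both facts are recorded in the surrounding discussion of Subsection \ref{subsec:cyc}, so the proof reduces to citing them and then invoking Lemma \ref{lem:int}.
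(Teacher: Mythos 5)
Your proposal is correct and takes essentially the same route as the paper, whose entire proof is the remark that the lemma ``follows immediately from what was done in the previous section'': you simply make explicit the ingredients the paper relies on, namely that $K=\QQ(\zeta_{p^r}+\zeta_{p^r}^{-1})$ is totally real, Galois over $\QQ$, with $p$ totally ramified and $\Oc_K/\pf\cong\FF_p$ (established in the paragraph preceding the lemma), the rank-$nN$ argument following Definition~\ref{def:Gamma}, and Lemma~\ref{lem:int} with $\alpha=1/p$ (where $\bar{y}_i=y_i$ since $K$ is totally real), plus positive definiteness from the total positivity of $1/p$. There is no gap; your explicit attention to why the unique-prime/Galois hypothesis licenses Lemma~\ref{lem:int} is exactly the right point of care.
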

This follows immediately from what was done in the previous section.
A generator matrix for the lattice $\Gamma_C=\rho^{-1}(C)$ is obtained as described in Proposition \ref{prop:MC}, namely 
\[
M_C=\frac{1}{\sqrt{p}}
\begin{pmatrix}
I_k \otimes M  & A \otimes M \\
{\bf 0}_{n(N-k),nk} & I_{N-k}\otimes DM \\
\end{pmatrix}
\]
where, as usual, $G=(I_k~~A)$ is a generator matrix of $C$.
Here $\pf$ is principal, and generated by 
$(2-\zeta_{p^r}-\zeta_{p^r}^{-1})$. A $\ZZ$-basis of $\Oc_{K^+}=\ZZ[\zeta_{p^r}+\zeta_{p^r}^{-1}]$ is $\{\zeta_{p^r}^i+\zeta_{p^r}^{-i}\}_{i=0}^{n-1}$. The matrix $M$ is thus obtained from this $\ZZ$-basis by applying the $n$ embeddings of $K$, which are of the form $\zeta_{p^r}+\zeta_{p^r}^{-1}\mapsto\zeta_{p^r}^i+\zeta_{p^r}^{-i}$, with $i$ coprime to $p$.
 
\revised{The case $r=1$, corresponding to $K^+=\QQ(\zeta_p+\zeta_p^{-1})$, has been treated in \cite{isit}.}

\begin{lem}
Let $K^+=\QQ(\zeta_p+\zeta_p^{-1})$, and let
$C\subset \FF_p^N$ be a $k$-dimensional code such that $C\subset C^\perp$. Then
\[
\Gamma_C^* = \Gamma_{C^\perp}.
\]
\end{lem}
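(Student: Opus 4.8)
The plan is to realize both lattices inside the same Euclidean space $\RR^{nN}$, where $n=[K^+:\QQ]=(p-1)/2$, via the scaled canonical embedding $\Phi(x)=\frac{1}{\sqrt p}(\sigma(x_1),\ldots,\sigma(x_N))$, so that the ambient inner product agrees with the bilinear form $\langle x,y\rangle=\sum_{i=1}^N\tr_{K^+/\QQ}(x_iy_i/p)$. In this picture $\Gamma_C=\Phi(\rho^{-1}(C))$ and $\Gamma_{C^\perp}=\Phi(\rho^{-1}(C^\perp))$, and the dual $\Gamma_C^*$ is taken with respect to this same inner product. I would prove the single inclusion $\Gamma_{C^\perp}\subseteq\Gamma_C^*$ and then match covolumes; since both are full-rank lattices in $\RR^{nN}$, equal covolume upgrades the inclusion to an equality.

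First I would establish the inclusion. Take $x\in\Gamma_C$ and $y\in\Gamma_{C^\perp}$, so $\rho(x)\in C$ and $\rho(y)\in C^\perp$. By definition of $C^\perp$ this gives $\rho(x)\cdot\rho(y)=0$ in $\FF_p$, hence $\sum_{i=1}^N x_iy_i\equiv0\pmod\pf$. This is exactly the mechanism used in the proof of Lemma \ref{lem:int}, only applied now to a cross pairing rather than to a single self-orthogonal codeword: because $K^+/\QQ$ is Galois (the maximal real subfield of the abelian field $\QQ(\zeta_p)$) and $\pf$ is the unique prime above $p$, every Galois conjugate of $\sum_i x_iy_i$ again lies in $\pf$, so $\tr_{K^+/\QQ}\!\left(\sum_i x_iy_i\right)\in\pf\cap\ZZ=p\ZZ$, and therefore $\langle x,y\rangle\in\ZZ$. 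This yields $\Gamma_{C^\perp}\subseteq\Gamma_C^*$.

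Next I would compare covolumes. The normalized discriminant recorded after Lemma \ref{lem:int} reads $disc(\Gamma_C)=d_{K^+}^N p^{2(N-k)-nN}$, and the identical formula applied to the $(N-k)$-dimensional code $C^\perp$ gives $disc(\Gamma_{C^\perp})=d_{K^+}^N p^{2k-nN}$. The single external input needed is the value $d_{K^+}=p^{n-1}=p^{(p-3)/2}$, which holds because $p$ is the only ramified prime in $K^+$ and its ramification is tame, $e=n=(p-1)/2$ being coprime to $p$; hence the different is $\pf^{e-1}$ and $d_{K^+}=N(\pf)^{e-1}=p^{n-1}$. Then $disc(\Gamma_C)\,disc(\Gamma_{C^\perp})=d_{K^+}^{2N}p^{2N-2nN}=p^{2N(n-1)}p^{2N(1-n)}=1$, so that $\vol(\Gamma_{C^\perp})=\vol(\Gamma_C)^{-1}=\vol(\Gamma_C^*)$, using the standard identity $\vol(\Lambda^*)=\vol(\Lambda)^{-1}$ for full-rank lattices.

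Finally, since $\Gamma_{C^\perp}\subseteq\Gamma_C^*$ and the two lattices have the same covolume, the index $[\Gamma_C^*:\Gamma_{C^\perp}]$ equals $1$ and they coincide. The routine part is the inclusion, a direct reuse of Lemma \ref{lem:int}. The one step that needs care, and the place where the specific field enters, is the covolume bookkeeping, which rests on the explicit discriminant $d_{K^+}=p^{(p-3)/2}$; I would either cite this classical value or derive it from tame ramification as above. The only other thing worth double-checking is that $\Gamma_{C^\perp}$, although no longer integral (since $C^\perp$ need not be self-orthogonal), is still a genuine full-rank lattice, so that the covolume argument applies; this is immediate from its generator matrix $M_{C^\perp}$.
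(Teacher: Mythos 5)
Your proposal is correct and follows essentially the same route as the paper: the inclusion $\Gamma_{C^\perp}\subseteq\Gamma_C^*$ via the trace argument of Lemma \ref{lem:int} applied to the cross pairing, followed by a covolume comparison resting on $d_{K^+}=p^{(p-1)/2-1}$ and the discriminant formula, with equality then forced by equal volumes. The only cosmetic differences are that you check $disc(\Gamma_C)\,disc(\Gamma_{C^\perp})=1$ rather than computing the two volumes separately as $p^{k-N/2}$, and you derive the discriminant value from tame ramification where the paper simply states it.
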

\begin{proof}
Let $x\in \Gamma_C$, $y\in \Gamma_{C^\perp}$. Then by definition of these lattices, 
$\rho(x)\in C$ and $\rho(y)\in C^\perp$, and it follows by definition of $C^\perp$ that 
$\rho(x)\cdot \rho(y) \equiv 0 \pmod p$. By an argument similar to that in the proof of Lemma \ref{lem:int}, we deduce that $\langle x,y \rangle \in \ZZ$, and thus $\Gamma_{C^\perp}\subset \Gamma_C^*$.

The discriminant of $\Gamma_C$ is 
\[
disc(\Gamma_C)=p^{N-2k}.
\]
This follows from the fact that
\[
disc(\Gamma_C)=d_{K^+}^N p^{2N-2k-N(p-1)/2}=(p^{(p-1)/2 -1})^N p^{2N-2k-N(p-1)/2},
\]
since $d_{K^+}=p^{(p-1)/2 -1}$.

It then follows that 
\[
\vol(\mathbb{R}^{nN}/\Gamma_C)=(p^{N-2k})^{1/2}=p^{\frac{N}{2}-k}
\]
and
\[
\vol(\mathbb{R}^{nN}/\Gamma_C^*)=p^{k-\frac{N}{2}}.
\]
On the other hand, the dimension of $C^\perp$ is $N-k$, and so
\[disc(\Gamma_{C^\perp})=p^{N-2(N-k)}=p^{2k-N},\]
implying that 
\[\vol(\mathbb{R}^{nN}/\Gamma_{C^\perp})=p^{k-\frac{N}{2}}.\]
We may now conclude that $\Gamma_C^*=\Gamma_{C^\perp}$.
\end{proof}

\begin{cor}
Let $K=\QQ(\zeta_p+\zeta_p^{-1})$ and let $C\subset\mathbb{F}_p^N$ be a $k$-dimensional code such that $C\subset{}C^\perp$. Then the lattice $\Gamma_C$ given in Definition \ref{def:Gamma} together with the bilinear form $\langle x,y \rangle = \sum_{i=1}^N \tr_{K/\mathbb{Q}}(\alpha x_i y_i)$ is an integral lattice of rank $Np^{r-1}(p-1)/2$. In addition, if $C$ is self-dual, then $\Gamma_C$ is an odd unimodular lattice.
\end{cor}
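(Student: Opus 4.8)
The plan is to verify the three assertions in sequence: integrality with the stated rank, unimodularity, and oddness. The first is essentially already in hand. Since a self-dual code in particular satisfies $C \subseteq C^\perp$, both Lemma \ref{lem:int} and the first lemma of this subsection apply verbatim, so that $\Gamma_C$ is an integral lattice of rank $N(p-1)/2$ (which is the stated $Np^{r-1}(p-1)/2$ specialized to $r=1$) with respect to $\langle x,y\rangle = \tfrac{1}{p}\sum_{i=1}^N \tr_{K/\QQ}(x_iy_i)$; I would simply cite these.

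For unimodularity I would invoke the preceding lemma, which asserts $\Gamma_C^* = \Gamma_{C^\perp}$ whenever $C \subseteq C^\perp$. When $C$ is self-dual we have $C = C^\perp$, hence $\Gamma_C^* = \Gamma_{C^\perp} = \Gamma_C$, so $\Gamma_C$ is unimodular. Alternatively, from $disc(\Gamma_C) = p^{N-2k}$ and the fact that a self-dual code forces $k = N/2$, one obtains $disc(\Gamma_C)=1$, which together with integrality again yields $\Gamma_C = \Gamma_C^*$.

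The remaining, and most substantive, step is to show $\Gamma_C$ is \emph{odd}, i.e.\ that some lattice vector has odd norm. The idea is to produce such a vector lying in $\Gamma_C$ for \emph{every} code, making oddness independent of the choice of $C$. Let $\lambda = 2 - \zeta_p - \zeta_p^{-1}$, the generator of $\pf$, and set $x = (\lambda, 0, \ldots, 0) \in \Oc_K^N$. Since $\lambda \in \pf$ we have $\rho(x) = 0 \in C$, so $x \in \Gamma_C$. Its norm is $\langle x,x\rangle = \tfrac{1}{p}\tr_{K/\QQ}(\lambda^2)$, and writing $\lambda^2 = 6 - 4(\zeta_p+\zeta_p^{-1}) + (\zeta_p^2 + \zeta_p^{-2})$ reduces the computation to the three traces $\tr_{K/\QQ}(1) = (p-1)/2$ and $\tr_{K/\QQ}(\zeta_p+\zeta_p^{-1}) = \tr_{K/\QQ}(\zeta_p^2+\zeta_p^{-2}) = -1$, each of the latter being a sum of conjugates that runs over all nontrivial $p$-th roots of unity, namely $\sum_{b=1}^{p-1}\zeta_p^b = -1$. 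This gives $\tr_{K/\QQ}(\lambda^2) = 3p$, hence $\langle x,x\rangle = 3$, which is odd.

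The main obstacle is this last norm computation, which has two moving parts: selecting a vector guaranteed to sit in $\Gamma_C$ regardless of $C$ (solved by placing the nonzero entry in $\pf$), and evaluating $\tr_{K/\QQ}(\lambda^2)$. The latter is routine once $\lambda^2$ is expanded in the power basis, but the oddness conclusion rests on the arithmetic accident that the $6$, the $-4(-1)$, and the $(-1)$ combine to exactly $3p$. One subtle point worth flagging is the degenerate small case $p=3$, where $K = \QQ$ and $\lambda = 3$; there the same formula still yields norm $3$, so oddness persists, and no case is lost.
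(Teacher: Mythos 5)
Your proof is correct and follows essentially the same route as the paper: unimodularity via the preceding lemma $\Gamma_C^*=\Gamma_{C^\perp}$ applied with $C=C^\perp$, and oddness from the very same vector $x=(2-\zeta_p-\zeta_p^{-1},0,\ldots,0)\in\pf\times\{0\}^{N-1}\subset\Gamma_C$ with the identical expansion $\lambda^2=6-4(\zeta_p+\zeta_p^{-1})+(\zeta_p^2+\zeta_p^{-2})$ and trace value $\tr_{K/\QQ}(\lambda^2)=3p$, giving $\langle x,x\rangle=3$. The only cosmetic differences are that you evaluate $\tr_{K/\QQ}(\zeta_p+\zeta_p^{-1})=-1$ directly as the root sum $\sum_{b=1}^{p-1}\zeta_p^b=-1$ where the paper uses transitivity of the trace through $\QQ(\zeta_p)$, and that you spell out (and cross-check via $disc(\Gamma_C)=p^{N-2k}$ with $k=N/2$) the unimodularity step that the paper's proof leaves implicit.
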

\begin{proof}
By an odd integral lattice, we mean an integral lattice $\Gamma$ which contains a vector $x\in\Gamma$ such that $\langle x,x \rangle$ is an odd integer.
Indeed, take $x=(2-\zeta_{p^r}-\zeta_{p^r}^{-1},0,\ldots,0) \in \Gamma$.
Then
\begin{align*}
\langle x,x\rangle & = \tr_{K^+/\QQ}((2-\zeta_{p}-\zeta_{p}^{-1})^2/p)\\
                   & = \frac{1}{p}\tr_{K^+/\QQ}(6-4(\zeta_{p}+\zeta_{p}^{-1})+(\zeta_{p}^2+\zeta_{p}^{-2}))\\
                   & = \frac{6(p-1)}{2p}+\frac{-3}{p}\tr_{K^+/\QQ}(\zeta_{p}+\zeta_{p}^{-1})\\
&= \frac{6(p-1)}{2p}+\frac{3}{p}=3
\end{align*}
since $\zeta_{p}+\zeta_{p}^{-1}$ and $\zeta_{p}^2+\zeta_{p}^{-2}$ are conjugate, 
and using the fact that
\begin{align*}
\tr_{K^+/\QQ}(\zeta_{p}+\zeta_{p}^{-1})
&=\tr_{K^+/\QQ}(\tr_{\QQ(\zeta_{p})/K^+}(\zeta_{p}))\\
&=\tr_{\QQ(\zeta_{p})/\QQ}(\zeta_{p})=-1.
\end{align*}
\end{proof}

\begin{exmp}\rm
Fix $p=5$, $K=\QQ(\zeta_5)$ and $K^+=\QQ(\zeta_5+\zeta_5^{-1})$, with 
$\rho:\ZZ[\zeta_5+\zeta_5^{-1}]^2 \rightarrow \FF_5^2$. 
Let $\xi=\zeta_5+\zeta_5^{-1}$. The degree of $K^+/\QQ$ is 2, and the two embeddings of $K$ are 
$\sigma_1$ which is the identity and $\sigma_2$ which maps $\zeta_5+\zeta_5^{-1}$ to 
$\zeta_5^2+\zeta_5^{-2}$, that is $\sigma_2(\xi)=-1-\xi$.

Consider the self-dual code of length 2 over $\mathbb{F}_5$ given by a generator matrix
\[
G=
\left(\begin{array}{cc}
1 & 2
\end{array}
\right),
\]
that is, $C=\{(0,0),(1,2),(2,4),(3,1),(4,3)\}$. Then, $\Gamma_C$ is an odd unimodular lattice of rank 4, and the only such lattice is $\mathbb{Z}^4$.

%Consider now the Cartesian product of the code $C$ by itself, that is, the code $C_2$ over $\mathbb{F}_5$ given by a generator matrix
%\[\left(\begin{array}{cccc}
%1 & 2 & 0 & 0 \\
%0 & 0 & 1 & 2
%\end{array}
%\right).
%\]
%Then, $\Gamma_{C_2}$ is an odd unimodular lattice of rank 8, which is $\mathbb{Z}^8$.

We next compute a generator matrix for the lattice $\Gamma_C$ explicitly. 
We choose the basis $\{1,\xi\}$ for $\Oc_K$, and it follows that the generator matrix for the lattice $\Oc_K$ together with the trace form $\langle x,y \rangle = \tr_{K/\QQ}(xy/5)$, $x,y\in\Oc_K$, is
\[
M=\frac{1}{\sqrt{5}}
\begin{pmatrix}
1 & 1 \\
\xi & \sigma_2(\xi)
\end{pmatrix}
=
\frac{1}{\sqrt{5}}
\left(\begin{array}{cc}
1 & 1 \\
\xi & -1-\xi
\end{array}
\right).
\]
The generator matrix for $\Gamma_C$ as a free $\ZZ$-module of rank $4$ is
\[
M_C=
\begin{pmatrix}
M & 2M \\
{\bf 0} & DM \\
\end{pmatrix}
=
\frac{1}{\sqrt{5}}
\left(\begin{array}{cccc}
1 & 1 & 2 & 2 \\
\xi & -1-\xi & 2\xi & 2(-1-\xi)\\
0 & 0 & 2-\xi & 2-(-1-\xi) \\
0 & 0 & \xi(2-\xi) & (-1-\xi)(2-(-1-\xi))  
\end{array}
\right)
\]
where 
\[
D=
\begin{pmatrix}
2 & -1 \\
-1 & 3
\end{pmatrix}
\]
and $DM$ is the matrix of embeddings of the ideal $\pf=(2-\xi)$.

\end{exmp}

The constructions from $\QQ(\zeta_{p}+\zeta_{p}^{-1})$ are particularly useful for coding applications to fading channels, as will be discussed in next section.
We conclude this section by mentioning two other families of subfields of $\QQ(\zeta_{p^r}+\zeta_{p^r}^{-1})$ that can be used to obtain lattices from codes.

Since the maximal totally real subfield $K^+=\QQ(\zeta_{p^r}+\zeta_{p^r})$ has degree $p^{r-1}(p-1)/2$, its Galois group contains a subgroup of order $p^{r-1}$, namely its unique Sylow $p$-subgroup, which in itself contains a cyclic subgroup $P$ of order $p$. Let $K^P$ be the subfield fixed by $P$, which has degree $p^{r-2}(p-1)/2$ over $\QQ$. Let $\pf$ be the prime in $K^P$ above $p$. As before,
\[
p^{r-1}\frac{(p-1)}{2}=e(\Pf|p)=e(\Pf|\pf)e(\pf|p)
\]
and thus $\pf$ is totally ramified. Let $C \subset \FF_p^N$ be a $k$-dimensional code with 
$C\subset C^\perp$. Then the field $K^P$ enables the construction of lattices of rank 
$Np^{r-2}(p-1)/2$, with respect to the bilinear form $\langle x,y \rangle = \sum_{i=1}^N \tr_{K^P/\QQ}(x_iy_i/p)$.

When $r=2$, $K^+=\QQ(\zeta_{p^2}+\zeta_{p^2}^{-1})$ contains a subextension of degree $p$ over $\QQ$, which again, as above, provides lattices of rank $N\frac{p-1}{2}$ from a code $C$ with $C\subset C^\perp$.

\begin{exmp}\rm
Take $p=5$, $r=2$, then $K=\QQ(\zeta_{25})$ contains the subfield of degree 5, given by the minimal polynomial
\[
p(X)=X^5-10X^3+5X^2+10X+1.
\]
\end{exmp}

%**************************************************************************%
%
% WIRETAP
%
%***************************************************************************%
\section{Applications to Coset Coding}
\label{sec:wc}

\revised{
We next discuss applications of the above constructions to coset coding for block fading channels. The particular case of wiretap block fading channels is discussed in Subsection \ref{subsec:wiretap}.} In what follows, we keep the notation consistent with those of the previous section.

%***************************************************************************%
\subsection{Coset Encoding for Lattices}
\label{subsec:enclatt}

Consider two nested lattices $\Lambda_e\subset\Lambda_b \subset \RR^{nN}$. We partition the lattice $\Lambda_b$ into a union of
disjoint cosets of the form $\Lambda_e+\cv$, where $\cv$ is an $nN$-dimensional vector. Suppose that we have $q^k$ cosets:
\[
\Lambda_b = \bigcup_{j=1}^{q^k} (\Lambda_e+\cv_j),
\]
and consider the mapping of $\sv\in\FF_q^k$ to a coset:
\[
\sv\mapsto \Lambda_e+\cv_{j\left(\sv\right)}.
\]
\revised{A point $\xv \in \Lambda_b$ can thus be written as $\xv\in \Lambda_e+\cv_{j\left(\sv\right)}$ or equivalently we write that $\xv\in\Lambda_b$} is of the form
\begin{equation}\label{eq:xsent}
\xv = \rv + \cv \in \Lambda_e + \cv.
\end{equation}
\revised{
This gives an explicit labeling of the lattice point $\xv \in \Lambda_b$, namely $k$ symbols in $\FF_q$ are needed to label $\cv$, and the rest of the symbols available are used to label points in $\Lambda_e$. 
}

Construction A from Example \ref{ex:a} can be used for lattice coset encoding 
as follows (with $n=1$). 
Let $C$ be an $(N,k)$ linear binary code. If $\Lambda_b=\rho^{-1}(C)/\sqrt{2}$ and $\Lambda_e=(2\ZZ)^N/\sqrt{2}$, we can partition $\Lambda_b$ as 
\[
\Lambda_b = \frac{1}{\sqrt{2}}\left( (2\ZZ)^N+C \right)=\frac{1}{\sqrt{2}}\bigcup_{\cv_i\in C}((2\ZZ)^N+\cv_i).
\]
\revised{This illustrates the benefits of coset encoding: the labeling of $\Lambda_e$ is typically difficult to perform while it is easy to label $\ZZ^N$ and encode a codeword $\cv_i$ using $k$ symbols and its generator matrix.}

%***************************************************************************%
\subsection{Coset Coding for Block Fading Channels}  
\label{subsec:block}

We now focus on the case where the communication channel is a block fading channel, and Perfect Channel State Information (CSI) is assumed at the receiver (Bob). 
With the aid of an in-phase/quadrature component interleaver, it is possible to remove the phase of the complex fading coefficients to obtain a real fading which is Rayleigh distributed and guarantee that the fading coefficients are 
independent from one real symbol to the next \cite[sec~2.1]{Oggier-2}.
Let $N$ be the coherence time of the block fading.
This is modeled by
\begin{equation}\label{eq:blockfading}
\begin{array}{ccl}
Y & = & \diag(\hv_{b})X+V_{b}
\end{array}
\end{equation}
where the $n\times N$ matrix $X$ is the transmitted signal, and the $n\times N$ matrix $V_{b}$ is the Gaussian noise at Bob. By channel assumption, $V_{b}$ has zero mean and variance $\sigma_{b}^{2}$. The fading matrix is given explicitly by
\begin{align*}
\diag(\hv_{b})&=\diag(|h_{b,1}|,\ldots,|h_{b,n}|)
\end{align*}
where the fading coefficients 
$h_{b,i}$ are complex Gaussian random variables with variance
$\sigma_{h,b}^{2}$, so that $|h_{b,i}|$
are Rayleigh distributed with parameter $\sigma_{h,b}^{2}$, for all $i=1,\ldots,n$.

One may vectorize the channel model (\ref{eq:blockfading}) and obtain an $nN-$dimensional 
lattice structure from the transmitted signal, that is,
\begin{align*}
\mathsf{vec}\left(Y\right)&=\mathsf{vec}\left(\diag\left(\hv_{b}\right)X\right)+\mathsf{vec}\left(V_{b}\right)\\
&=
\left(\!\!\!
\begin{array}{ccc}
\diag(\hv_b)&&\\
&\ddots& \\
&& \diag(\hv_b)
\end{array}\!\!\!
\right)\mathsf{vec}(X)+\mathsf{vec}(V_b),
\end{align*}
We now interpret the $n\times N$ dimensional codeword $X$ as coming from a lattice by writing
\[
\mathsf{vec}(X)=M_C\uv
\]
where $\uv\in\ZZ^{nN}$ and $M_C$ denotes the $nN\times nN$ generator matrix of the lattice $\Lambda_b$. This representation allows us to consider the transmitted signals as real lattice points, and coset coding simplifies to the procedure for lattices discussed in Subsection \ref{subsec:enclatt}.

%***************************************************************************%
\subsection{Wiretap Encoding of Algebraic Lattices}  
\label{subsec:algenc}

Let $K$ be a totally real number field of degree $n$ with ring of integers $\mathcal{O}_K$ and $n$ embeddings $\sigma_{1},\sigma_{2},\ldots,\sigma_{n}$ of $K$ into $\RR$. 
As before, suppose that $K$ contains a prime $\pf$ above $p$ which totally ramifies. 
Let $C \subset C^{\perp}$ be a linear $(N,k)$ code, with a generator matrix
\[
G=
\begin{pmatrix}
I_k & A \\
\end{pmatrix}.
\]
Recall that $x_j\in\Oc_K$ is written as $x_j=\sum_{l=1}^nx_{jl}\nu_l$ in an integral basis 
$\{\nu_1,\ldots,\nu_n\}$, and that
\[
\sigma(x_j)=(\sigma_1(x_j),\ldots,\sigma_n(x_{j}))
\]
is the canonical embedding of $K$.

A lattice point $\xv$ in $\Gamma_C$ is, as shown in (\ref{eq:x}), given by
\[
\xv=(\sigma(x_1),\ldots,\sigma(x_k),\sum_{j=1}^ka_{j,1}\sigma(x_j)+\sigma(x'_{k+1}),\ldots,
\sum_{j=1}^ka_{j,N-k}\sigma(x_j)+\sigma(x'_N))
\]
with $x'_{k+1},\ldots,x'_N \in \pf$,
which can be rearranged into an $n\times N$ matrix $X$ as follows:
\[
X=
\begin{pmatrix}
\sigma(x_1)^T,\ldots,\sigma(x_k)^T,(\sum_{j=1}^ka_{j,1}\sigma(x_j)+\sigma(x'_{k+1}))^T,\ldots,
(\sum_{j=1}^ka_{j,N-k}\sigma(x_j)+\sigma(x'_N))^T
\end{pmatrix},
\]
that is (with normalization)
\[
X=
\frac{1}{\sqrt{p}}
\begin{pmatrix}
\sigma_1(x_{1}) & \ldots & \sigma_1(\sum_{j=1}^ka_{j,1} x_{j}+x'_{k+1}) & \ldots & \sigma_1(\sum_{j=1}^ka_{j,N-k} x_{j}+x'_{N})\\
\sigma_2(x_{1}) & \ldots &  \sigma_2(\sum_{j=1}^ka_{j,1} x_{j}+x'_{k+1}) & \ldots & \sigma_2(\sum_{j=1}^ka_{j,N-k} x_{j}+x'_{N})\\
\vdots & & \vdots & & \vdots \\
\sigma_n(x_{1}) & \ldots & \sigma_n(\sum_{j=1}^ka_{j,1} x_{j}+x'_{k+1}) & \ldots & \sigma_n(\sum_{j=1}^ka_{j,N-k} x_{j}+x'_{N})\\
\end{pmatrix}.
\]
Now, coset encoding is performed by setting $\Lambda_b=\rho^{-1}(C)/\sqrt{p}=\Gamma_C$ and $\Lambda_e=\pf^N/\sqrt{p}$, using the fact that
\[
\frac{\rho^{-1}(C)}{\sqrt{p}} = \frac{1}{\sqrt{p}} \left( \pf^N + C \right) 
=\frac{1}{\sqrt{p}} \bigcup_{c_i\in C} \left( \pf^N + c_i \right). 
\]
In other words, the choice of $(x_1,\ldots,x_k)\in\Oc_K^k$ determines a coset of $\pf^N$, since 
\[
(\rho(x_1),\ldots,\rho(x_k))=(\pf+c_{i1},\ldots,\pf+c_{ik})
\]
and consequently
\begin{align*}
&(\rho(x_1),\ldots,\rho(x_k),\sum_{j=1}^k a_{j,1}\rho(x_j),\ldots,\sum_{j=1}^k a_{j,N-k}\rho(x_j)) \\
&=(\pf+c_{i1},\ldots,\pf+c_{ik},\ldots \pf+c_{in}) \\
&=\pf^N+c_i,
\end{align*}
for $c_i$ a codeword in $C$.
This explains why the construction given in Section 2 is useful for fading coset codes: it gives a way to perform coset encoding of lattices obtained from number fields, and in particular from totally real number fields. 
\begin{exmp}\rm
When $K=\QQ(\zeta_p+\zeta_p^{-1})$, the lattice obtained over $\pf$, with respect 
to the trace form, is isomorphic to $\ZZ^{(p-1)/2}$ \cite{Oggier-2}. Thus, the construction of Subsection 
\ref{subsec:cyc} is particularly suited for coset encoding, since $\ZZ^N$ is the easiest lattice to label.
\end{exmp}
Next, we detail why lattices from totally real number fields are good for reliability as well.

Denote the first row of $X$ by $\xv_1$ ($\sigma_1$ is the identity map): 
\[
\xv_1=(x_{1},\ldots,x_k,\sum_{j=1}^ka_{j,1} x_{j}+x'_{k+1},\sum_{j=1}^ka_{j,N-k} x_{j}+x'_{N} )
\]
and the $i$th row of $X$ is obtained as $\mathbf{x}_{i}=\sigma_{i}\left(\mathbf{x}_1\right)$: 
\begin{equation}\label{eq:vecX}
X=
\begin{pmatrix}
\xv_1 \\
\vdots\\
\xv_n
\end{pmatrix}
=
\begin{pmatrix}
x_1 & \ldots  & x_N \\
\sigma_2(x_1) & \ldots & \sigma_2(x_N) \\
\vdots &                & \vdots \\
\sigma_n(x_1) & \ldots & \sigma_n(x_N) 
\end{pmatrix}.
\end{equation}
Alternatively, the $j$th column of $X$, $j=1,\ldots,N$, can be 
seen as a lattice point
\begin{equation}\label{eq:ptx}
(\sigma_1(x_j),\ldots,\sigma_n(x_j))
\end{equation}
in the lattice $\Oc_K$ with the bilinear form given in (\ref{eq:trtr}),
or more precisely in the coset $\pf+c_{ij}$.
In this case, since $K$ is chosen to be totally real, it follows from the injectivity of $\sigma_i$ that $||\xv_i||\neq 0$ for all $i$ \cite{Oggier-2}.  
In fact, for every non-zero coefficient of the first row 
$\xv_1$, all the entries in the corresponding columns will also have non-zero coefficients. Conversely, 
each zero coefficient on the first row gives a column of zeros, and to have  
$||\xv_i||=0$ for some $i$ means that $||\xv_i||=0$ for all $i$, and so $X$ can only contain zeros. 
This property is referred to as full diversity, and it is well known \cite{Oggier-2}  that full diversity is a sufficient condition to ensure 
Bob's reliability over fading channels.

%***********************************************************************************%
\subsection{Applications to Wiretap Channels}
\label{subsec:wiretap}

\revised{
Coset encoding is also particularly useful in the context of wiretap channels.
}
A wiretap channel, as introduced by Wyner \cite{Wyner-I}, is a broadcast channel where Alice, a legitimate sender, communicates to Bob in the presence of an eavesdropper Eve. 
The underlying assumption is that the channel between Alice and Bob is different from that between 
Alice and Eve. 
A wiretap code has the purpose of exploiting this difference in channels and providing reliability and confidentiality for the communication between Alice and Bob.
When the channel is real (or complex), wiretap codes are typically designed from lattices.

The idea behind wiretap coding is that, instead of having a one-to-one correspondence
between a vector of information and a lattice point, this vector of
information is mapped to a set of codewords, after
which the codeword to be actually transmitted is chosen randomly from this set. \revised{This can be done using coset coding, which we explain next by revisiting the generic coset encoding discussed in Subsection \ref{subsec:enclatt}.
}

Let $\sv\in\FF_q^k$ be the confidential information vector that Alice wishes to send to Bob. Consider again two nested lattices $\Lambda_e\subset\Lambda_b \subset \RR^{nN}$, where \revised{the subscript $b$ refers to the lattice used for Bob's reliability, while $e$ refers to the lattice used because of Eve.
The partition $\Lambda_b = \bigcup_{j=1}^{q^k} (\Lambda_e+\cv_j)$ into $q^k$ cosets of $\Lambda_e$ is used as follows: Alice maps the confidential message $\sv$ to a coset
\[
\sv\mapsto \Lambda_e+\cv_{j\left(\sv\right)}
\]
then she} randomly chooses a point $\xv\in \Lambda_e+\cv_{j\left(\sv\right)}$ and
sends it over the wiretap channel.
This is equivalent to choosing a random vector $\rv\in\Lambda_e$ and transmitting a lattice point $\xv\in\Lambda_b$ of the form:
\begin{equation}\label{eq:xsent}
\xv = \rv + \cv \in \Lambda_e + \cv.
\end{equation}

\revised{We illustrate how Construction A from Example \ref{ex:a} can be used for wiretap lattice encoding as follows (with $n=1$). }

\begin{exmp}\rm
Consider the 2-dimensional repetition code $C=\{(0,0),(1,1)\}$. Then
\[
\rho^{-1}(C)=(2\ZZ)^2+C=((2\ZZ^2)+(0,0))\cup((2\ZZ)^2+(1,1)).
\]
Since $C=C^\perp$, the lattice
\[
\frac{\rho^{-1}(C)}{\sqrt{2}}
\]
is a 2-dimensional unimodular lattice, thus equivalent to $\ZZ^2$.
In a wiretap encoder, Alice has $k=1$ secret bit that she uses to choose either the coset 
$(2\ZZ^2)+(0,0)$ or $(2\ZZ^2)+(1,1)$. Then, depending on how many bits of randomness she is using, 
she will pick one point at random within the chosen coset.
\end{exmp}

In a case of wiretap block fading channel, Alice wants to send data to Bob on a wiretap block fading channel where an eavesdropper Eve is trying to intercept the data through another block fading channel. Perfect Channel State Information (CSI) is now assumed at both receivers. 
Under the same assumptions for Eve's channel as that given for Bob's channel in (\ref{eq:blockfading}), we get a wiretap block fading channel given by
\[
\begin{array}{ccl}
Y & = & \diag(\hv_{b})X+V_{b},\mbox{ and}\\
Z & = & \diag(\hv_{e})X+V_{e}
\end{array}
\]
which can be vectorized to obtain
\begin{align*}
\mathsf{vec}\left(Y\right)&=\mathsf{vec}\left(\diag\left(\hv_{b}\right)X\right)+\mathsf{vec}\left(V_{b}\right)\\
&=
\left(\!\!\!
\begin{array}{ccc}
\diag(\hv_b)&&\\
&\ddots& \\
&& \diag(\hv_b)
\end{array}\!\!\!
\right)\mathsf{vec}(X)+\mathsf{vec}(V_b),
\end{align*}
and
\begin{align*}
\mathsf{vec}\left(Z\right)&=\mathsf{vec}\left(\diag\left(\hv_{e}\right)X\right)+\mathsf{vec}\left(V_{e}\right)\\
&=
\left(\!\!\!
\begin{array}{ccc}
\diag(\hv_e)&&\\
&\ddots& \\
&& \diag(\hv_e)
\end{array}\!\!\!
\right)\mathsf{vec}(X)+\mathsf{vec}(V_e).
\end{align*}
thanks to which wiretap coding simplifies to coset encoding as described above.

It was shown in \cite{icc,BO13,SS14} that the code design criterion to increase Eve's confusion is to choose $\Lambda_e \subset \Lambda_b$ such that
\begin{equation}
\sum_{\xv \in \Lambda_e \cap \mathcal{R}}
\prod_{||\xv_i||\neq 0}\frac{1}{\left\Vert \mathbf{x}_{i}\right\Vert ^{N+2}}
\label{eq:sum-block}
\end{equation}
is minimized, where $\xv_i$ is the $i$th row of some $n\times N$ matrix $X$ such that 
$\xv=\mathsf{vec}(X)$ is a point in $\Lambda_e$. 
The region $\mathcal{R}$ describes a finite constellation, \revised{that is, a finite set of lattice points used as signals,} carved out of the infinite lattice $\Lambda_e$. 

\revised{
Wiretap encoding using algebraic lattices as explained in Subsection \ref{subsec:algenc} inherits the same advantages as coset encoding in terms of diversity. More precisely, since $\Lambda_e \subset \Lambda_b$ are both lattices built over the same 
totally real number field, they both enjoy the property of full diversity. As already mentioned, diversity of $\Lambda_b$ ensures 
Bob's reliability over fading channels. While it may initially seem that diversity for $\Lambda_e$ is not a good thing, it turns out that}
having all the $n$ terms $\left\Vert \mathbf{x}_{i}\right\Vert$ non-zero \revised{minimizes} the sum (\ref{eq:sum-block}), which becomes
\begin{equation}
\sum_{\xv \in \Lambda_e \cap \mathcal{R}}
\prod_{i=1}^n\frac{1}{\left\Vert \mathbf{x}_{i}\right\Vert ^{N+2}}
=
\sum_{\xv \in \Lambda_e \cap \mathcal{R}}
\prod_{i=1}^n\frac{1}{\left\Vert \sigma_i(\mathbf{x}_{1})\right\Vert ^{N+2}}.
\label{eq:sum-div}
\end{equation}
\revised{
Diversity thus appears to increase Eve's confusion as well. This can be intuitively explained as follows: ideally Eve should be able to decode perfectly random symbols, up to her channel capacity, which will in turn protect the confidential symbols. 
}

%**************************************************************************%
%
% CONCLUSION
%
%***************************************************************************%

\section{Conclusion}
In this paper, we proposed  a construction of lattices over number fields using linear codes over finite fields, mimicking Construction A from binary codes. We studied into details the case 
of Galois number fields, where there is a prime that totally ramifies. Maximal totally real subfield of the cyclotomic field $\QQ(\zeta_{p^r})$ are particularly considered. Applications to coset encoding for block fading channels were presented, 
\revised{giving an explicit bit labeling that guarantees the diversity of the coding scheme. We further discuss the special case of wiretap codes,}
showing why the proposed algebraic lattices are not only suited for coset encoding, but also optimize the code design criterion for wiretap codes. Future work involves a deeper study of the obtained lattices.
In particular, as shown in \cite{icc}, (\ref{eq:sum-div}) then becomes
\[
\sum_{\xv\in\Lambda_{e}\cap \Rc}\prod_{i=1}^{n}\frac{1}{\left\Vert \sigma_{i}\left(\mathbf{x}_1\right)\right\Vert ^{N+2}}=\sum_{\xv\in\Lambda_{e}\cap \Rc}\frac{1}{N_{K/\mathbb{Q}}\left(\left\Vert \mathbf{x}_1\right\Vert ^{2}\right)^{\frac{N}{2}+1}},
\]
and, among algebraic lattices, finding those which further optimize this design criterion is open.

%**************************************************************************%
%
% REFS
%
%***************************************************************************%

\end{document}